\theoremstyle{plain}
\newtheorem{thm}{Theorem}
\newtheorem{Pp}[thm]{Proposition}
\newtheorem{Df}[thm]{Definition}
\newtheorem{Rm}[thm]{Remark}
\renewcommand\thesection{\arabic{section}.\kern -.3em}
\renewcommand{\thesubsection}{\arabic{section}.\arabic{subsection}.\kern -.5em}
\newcommand{\ign}[1] {{}}
\begin{document}
\title{{\bf Decidability of minimization of  fuzzy  automata}}
\author{\hskip 2mm  Lvzhou Li$^{1,}$\thanks{lilvzh@mail.sysu.edu.cn (L. Li).} , Daowen Qiu$^{1,2,}$\thanks{issqdw@mail.sysu.edu.cn (Corresponding author, D. Qiu).}
\\
\small{{\it $^1$Department of
Computer Science, Sun Yat-sen University, Guangzhou 510006, China}}\\
\small{{\it $^2$SQIG--Instituto de Telecomunica\c{c}\~{o}es, Departamento de Matem\'{a}tica, Instituto Superior T\'{e}cnico, }}\\
\small{{\it  Technical University of Lisbon, Av. Rovisco Pais 1049-001, Lisbon, Portugal}}\\
}\date{ }
\maketitle

\begin{abstract}
State minimization is a fundamental problem in automata theory.  The problem   is also of great importance in the study of fuzzy automata. However, most  work in the literature  considered only  state reduction of fuzzy automata, whereas the state minimization problem is almost untouched for fuzzy automata. Thus in this paper we focus on the latter problem. Formally, the decision version of the minimization problem of fuzzy automata is as follows:
\begin{itemize}
  \item Given a fuzzy  automaton $\mathcal{A}$ and a natural number $k$, that is, a pair $\langle \mathcal{A}, k\rangle$,  is there a $k$-state fuzzy  automaton equivalent to  $\mathcal{A}$?
\end{itemize}
We prove for the first time that the above problem is decidable for fuzzy automata over  totally ordered lattices.
To this end, we first give the concept of  systems of fuzzy polynomial equations and then present a procedure to solve these systems. Afterwards, we apply the solvability of  a system of fuzzy polynomial equations to the minimization problem mentioned above, obtaining the decidability. Finally,  we point out that the  above problem is  at least as hard as PSAPCE-complete.

\end{abstract}

\par
\vskip 2mm {\sl Index Terms:}  Fuzzy automata;  Minimization; Decidability.

\section{Introduction}

Fuzzy finite automata were introduced by Wee \cite{Wee67} and Santos \cite{San68,San72} in the late 1960s.
Subsequently, the fundamentals of fuzzy language theory were established by Lee and Zadeh \cite{LZ69}, and by Thomason and Marinos \cite{TM74}.
For early surveys of the theory of fuzzy finite-state machines we can refer to \cite{DP80,GSG77,KL97}.  Malik and Mordeson et al have systematically established the theory of algebraic fuzzy finite automata \cite{MM00,MM02}, and the corresponding concept of fuzzy languages was introduced in  \cite{MM00,MM02,Qiu04}.
The idea of fuzzy finite automata valued in some abstract sets may go back to Wechler's work \cite{Wec78}. Notably,
fuzzy finite automata have many significant applications such as in learning systems \cite{Wee69}, pattern recognition \cite{DP80,MM02}, data base theory \cite{DP80,MM02}, discrete event systems \cite{LY02,Qiu05}, and neural networks \cite{DP80,GSG77,MM02,GOT99,OTG98}. To date, there are still lots of important contributions to fuzzy finite automata in the academic community, but the minimization problem of fuzzy finite automata has still not been solved and remains open.

The purpose of this paper is to study the minimization problem of fuzzy finite automata, simply {\it fuzzy automata}. Minimizing a fuzzy automaton means finding a fuzzy automaton that has minimal number of states among the set of all fuzzy automata equivalent to the given one. Since fuzzy automata are generalizations of nondeterministic finite automata (NFAs), we first briefly recall the history of minimizing NFAs and related problems, which is helpful for us to understand the corresponding problems of fuzzy automata.

The study of the minimization problem for finite automata dates back to the early beginnings of automata theory. Although the problem's fundamental nature is sufficient to justify its study, the problem is of practical relevance, because regular languages are used in many applications, and one may like to represent the languages succinctly. As we know, deterministic finite automata (DFAs)  and nondeterministic finite automata (NFAs) are two important representations of regular languages.  It is well known that there exists an efficient algorithm running in time $O(n\log n)$ for minimizing any given $n$-state DFA \cite{Hop71}. The idea of the algorithm is to merge indistinguishable states of the given automaton, obtaining a unique reduced DFA which is the minimal one we find. However in the case of NFAs,
the reduced automaton obtained by merging indistinguishable states is not always minimal.  Therefore, one has to find other methods for minimizing NFAs.

It is obvious that there are only finite number of NFAs which have fewer states than the given automaton. As a result, we can always resort to an exhaustive procedure to find a minimal equivalent NFA. But this procedure is too lengthy to be practical. For that reason, many attempts have been made to search for more efficient algorithms for the minimization problem of NFAs.
The first non-exhaustive search algorithm for minimizing NFAs was proposed by Kameda and Weiner \cite{KW70} in 1970. In 1992, Sengoku \cite{Sen92} proposed another algorithm which was claimed to be more effective.  However, both of the two algorithms are not of polynomial-time complexity, which is contrary to the deterministic case.
 Indeed, the minimization problem of NFAs was proven to be computationally hard (PSPACE-complete) by Jiang and Ravikumar \cite{Jiang93}  in 1993.  Nevertheless this hardness did not prevent investigators from considering this problem any furthermore (e.g. \cite{Mel99, Mel20, Pol05,CC03}).
In addition, although  the problem is known to be PSPACE-complete, one may propose algorithms of not
too high complexities to compute relatively good approximations.

 A closely related problem to minimization is the state reduction problem  which does not aim at  a minimal NFA, but  find ``reasonably'' small NFAs that can be constructed efficiently. To this end, Ilie and Yu \cite{IY02} (see also \cite{IY03}) first proposed the  method of
reducing the size of NFAs by using equivalence relations (right invariant equivalence and left invariant equivalence).  This work leaded to a series of further work on this topic (e.g. \cite{Yu:b, Yu:c,Yu:d, Yu:a, CC04}).   The basic idea of the method  is to   merge indistinguishable states, which   resembles
the minimization algorithm for DFAs, but is more complicated. Note that right invariant equivalence
was also studied from another aspect by Calude et al. \cite{Cal00} under the name well-behaved equivalence. In fact, right invariant equivalence shares the same idea with a central concept in theoretical computer science---``bisimulation equivalence'' which originates from concurrency theory and  modal logic but now  is a rich concept  appearing in various areas of theoretical computer science such as formal verification, model checking, etc.

Fuzzy finite automata, simply fuzzy automata,  are generalizations of NFAs, and the mentioned problems concerning minimization and reduction of states are also present in work with fuzzy automata. Reduction of number of states of fuzzy automata was studied in \cite{Pee91, MS99, BG02, CM04, LL07,  Pet06, WQ10}, and the algorithms given there were also based on the idea of  merging indistinguishable states. Although these algorithms were claimed to be  minimization algorithms, the term  ``minimization'' in the mentioned references does not mean state minimization in our sense,  since it does not mean the usual construction of the minimal one in the set of all fuzzy automata recognizing a given fuzzy language, but just the procedure of  merging indistinguishable states. Therefore, these are just state reduction algorithms. Very recently,  \'{C}iri\'{c} et al \cite{JCS10} made  new  progress on  reduction of fuzzy automata. Specifically, they extended the concept of   equivalence on NFAs to  fuzzy equivalences on fuzzy automata.  Then they studied  state reduction of fuzzy  automata by means of fuzzy equivalences, which parallels the work of Ilie and Yu \cite{IY02} on NFAs. In all previous papers dealing with reduction of fuzzy automata (cf. \cite{Pee91,MS99, BG02, CM04, LL07, Pet06, WQ10}) only reductions by means of crisp equivalence were investigated, whereas \'{C}iri\'{c} et al  \cite{JCS10} showed that better reductions can be achieved by employing fuzzy equivalence.

From the above introduction, one can find that so far much attention  has been paid to  reduction of fuzzy automata, but another important problem  is almost untouched---the minimization problem.  Obviously, the minimization problem looks more difficult than the state reduction problem, similar to the  situations in the NFA case. However, this should not become the reason for one not trying to address the minimization problem. Indeed, in this paper we are going to  attempt to consider this problem and hope this would inspire some further discussion. Formally, the decision version of the minimization problem of fuzzy automata is described as follows:
\begin{itemize}
  \item Given a fuzzy  automaton $\mathcal{A}$ and a natural number $k$, that is, a pair $\langle \mathcal{A}, k\rangle$,  is there a $k$-state fuzzy automaton equivalent to  $\mathcal{A}$?
\end{itemize}

Note that fuzzy automata can be defined over different underlying structures of truth values. A general model is the one defined over  complete residuated lattices proposed by Qiu \cite{Qiu01, Qiu02}. In this paper,  fuzzy automata are restricted to be over totally ordered lattices. We prove that the above problem is decidable for this model. The result will be based, on one side, on the decidability of the equivalence problem of fuzzy automata and, on the other side, on the solvability of a system of fuzzy polynomial equations. The former base is an active problem having received much attention and can be addressed successfully. The later one is a new problem and involves a new concept---systems of fuzzy polynomial equations. A system of fuzzy polynomial equations resembles the usual notion of a system of  polynomial equations on the real-number field. However, the ideas for solving the two types of systems are greatly different.
Therefore, in this paper we will first present a procedure to solve systems of fuzzy polynomial equations, and afterwards apply the result to address the minimization problem of fuzzy automata, obtaining the decidability result.

 The reminder of this paper is organized as follows.  In Section 2 some preliminary knowledge is given.   In Section  3 we give the concept of systems of fuzzy polynomial equations and present a procedure to solve these systems, establishing a crucial base for the minimization problem. In Section 4 we give a condition for two fuzzy automata being equivalent, obtaining another base for the above problem.  Section 5 is a main part of this paper where we prove that the minimization problem of fuzzy automata is decidable and furthermore we point out that it is at least as hard as PSPACE-complete. Finally, some conclusions are made in Section 6.

\section{Preliminaries} \label{Sec-pre}
 In this paper $\mathbf{L}=(L,\wedge,\vee, 0, 1)$ means a  lattice with the least element $0$ and the greatest element $1$, where $L$ is a poset with the partial order relation $\leq$, and for any $x,y\in L$, $x\vee y$ and $x\wedge y$ denote the {\it least upper bound} and the {\it greatest lower bound } of $\{x, y\}$, respectively. $\mathbf{L}$ is said to be   {\it totally ordered}, if  $L$ is a totally ordered set.  From now on we assume that $\mathbf{L}$ is a totally ordered  lattice.

 A {\it fuzzy subset} of a set $A$ over $\mathbf{L}$, or simply a {\it fuzzy subset}
of $A$, is any mapping from $A$ into $L$.  An $n\times m$ {\it fuzzy matrix} over $\mathbf{L}$ is given by $A=[a_{ij}]_{n\times m}$ where  $a_{ij}\in L$. $L^{n\times m}$ denotes the set of all $n\times m$  fuzzy matrices over $\mathbf{L}$. Let $A\in L^{n\times m}$ and $B\in L^{m\times l}$. Their product, denoted by $A\circ B$, is defined by
\begin{equation}(A\circ B)_{ij}=\bigvee_{k=1}^m a_{ik}\wedge b_{kj}\label{Product}\end{equation}
where $1\leq i\leq n$, $1\leq j\leq l$.
  For two fuzzy matrices $A$ and $B$, their direct sum is $A\oplus B=\left(
                                                                                                       \begin{array}{cc}
                                                                                                         A & O \\
                                                                                                         O & B \\
                                                                                                       \end{array}
                                                                                                     \right)
$ where $O$ denotes a matrix with all elements being the least element $0$.

 Given $a,b\in L$ satisfying $a\leq b$, $[a,b]=\{x\in L: a\leq x\leq b\}$ is called an {\it  interval} of $L$. By this, we may identify $L$ with the interval $[0,1]$\footnote{One should not confuse the interval with the real unit interval $[0,1]$. In this paper $0$ and $1$ respectively stand for the least element and the greatest element of a lattice.}.
An $n$-dimensional {\it interval vector} is an $n$-tuple $\mathcal{X}=(X_1,\cdots, X_n)$ where $X_i\subseteq L$. Given two $n$-dimensional interval vectors ${\cal X}$ and ${\cal Y}$, their intersection is defined by $\mathcal{X}\cap \mathcal{Y}=(X_1\cap Y_1, \cdots, X_n\cap Y_n)$. Let $S_1$ and $S_2$ be two sets of interval vectors with the same dimension.
The {\it product} of $S_1$ and $S_2$ is defined by \begin{equation}S_1\star S_2=\{\mathcal{X}\cap \mathcal{Y}: \mathcal{X}\in S_1, \mathcal{Y}\in S_2\}.\end{equation} Note that two nonempty sets $S_1$ and $S_2$ may produce a product set with $(\emptyset,\emptyset,\cdots,\emptyset)$ as its unique element where $\emptyset$ denotes the empty set.
It holds that $|S_1 \star S_2|\leq |S_1|.|S_2|$  where $|S|$ denotes  the cardinality of set $S$.

\section{Solving the systems of  fuzzy polynomial equations}\label{Sec-poly}

A {\it fuzzy monomial } over $\mathbf{L}=(L,\wedge,\vee, 0, 1)$ is a finite expression $M=a\wedge x_1\wedge x_2\wedge\cdots \wedge x_k$ where $k\geq 0$, $a\in L$, $x_1,\cdots, x_k$ are variables with values in $L$. When $k=0$, it means a constant.  If the coefficient $a$ is the greatest element $1$, the monomial is then simply written by $x_1\wedge x_2\wedge\cdots \wedge x_k$. A {\it fuzzy polynomial } over $\mathbf{L}$ is an expression of the form $M_1\vee M_2\vee \cdots \vee M_k$ where $k\geq 1$ and $M_1,\cdots, M_k$ are monomials over $\mathbf{L}$. It is usually denoted by $P=\bigvee_{i=1}^{k} M_i$.

A {\it system of  fuzzy  polynomial (in)equations (SFPI)} is of the form
\begin{equation}
  \begin{cases}
    P_1 \bowtie a_1\\
~~~~~\vdots\\
P_m \bowtie a_m
  \end{cases}\label{SFI}
\end{equation}
where $\bowtie\in \{=, \leq, \geq,<, > \}$, $P_i$'s are  polynomials and $a_i\in L$ for $i=1,\cdots, m$.  If $\bowtie$ is always chosen to be ``$=$'',  then (\ref{SFI}) is called a {\it system of  fuzzy polynomial equations (SFPE)}.  Suppose the system involves $n$ variables $(x_1,\cdots, x_n)$. $X^0=(x^0_1,\cdots, x^0_n)$ is called  a {\it  point solution } of (\ref{SFI}), if  (\ref{SFI}) holds when  $(x_1,\cdots, x_n)$ is assigned with $X^0$.  If (\ref{SFI}) has a point solution, then it is said to be {\it solvable}. Otherwise, it is {\it unsolvable}.
 An $n$-dimensional interval vector $\mathcal{X}=(X_1,\cdots, X_n)$ where $X_i\subseteq L$ is called an {\it interval solution} of (\ref{SFI}), if each $n$-tuple $(x^0_1,\cdots, x^0_n)$ with $x^0_i\in X_i$ is a point solution of  (\ref{SFI}) and $(X_1,\cdots, X_n)$ is maximum\footnote{Note that the inclusion ``$\subseteq$'' is a partial order on $2^L$, and it can  be  extended up to be a partial order on the set of interval vectors.} with respect to this property.

\begin{Rm}
   First we mention  that  a special case of the above discussion that each monomial $M$ involves only one variable was considered in \cite{Pee92}. Secondly, in the above definitions if $\vee$ and $\wedge$ are respectively replaced with  the usual addition ``$+$'' and the multiplication ``$.$'', and $L$ is replaced with the real number field $\mathbb{R}$, then $P$ is the usual  polynomial and (\ref{SFI}) is a system of polynomial (in)equations. Note that deciding the solvability of a  system of polynomial (in)equations is a very important problem which has wide applications. A more general problem is the decision problem for the existential theory of
the reals  which is the problem of deciding if the set $\{x\in \mathbb{R}^n: P(x)\}$ is non-empty, where $P(x)$ is a predicate which is a boolean combination  of atomic predicates either of
the form $f_i(x)\geq 0$ or $f_j(x)> 0$, $f$'s being real polynomials (with rational coefficients). It is known that this problem can be decided in PSPACE \cite{Ren88,canny:88} and it has been applied to the  complexity analysis on the reachability of Recursive Markov Chains \cite{EY09} and on the minimization problem of quantum and probabilistic automata \cite{MQL12}.

   \end{Rm}

Now we consider the decision problem of SFPI, that is:
\begin{itemize}
  \item  Is it decidable whether   a given system like (\ref{SFI}) is solvable or not?
\end{itemize}
In the following we consider a special case that the relation operator $\bowtie$ in (\ref{SFI}) is always chosen to be ``$=$'' and all polynomials  take $1$ as coefficients. We will show that the above problem is decidable in this case. Afterwards, we apply the result to the minimization problem of fuzzy automata.
 Hereafter our aim is to determine whether the following system is solvable or not:
\begin{equation}
  \begin{cases}
    P_1 = a_1\\
~~~~~\vdots\\
P_m = a_m
  \end{cases}\label{SFE}
\end{equation}
where all polynomials $P_i$'s take $1$ as coefficients.

First we consider the case of one equation with a monomial, i.e.
\begin{equation}
x_1\wedge x_2\wedge\cdots \wedge x_n=a.\label{moe}
\end{equation}
It is easy to see that a solution of this equation has the property that one $x_i$ equals $a$ and all others are not less than $a$. More formally, the equation has $n$ interval solutions as follows
\begin{equation}\begin{array}{c}
([a,a], [a,1],[a,1],\cdots, [a,1]),\\
([a,1], [a,a],[a,1],\cdots, [a,1]),\\
\vdots \\
([a,1], [a,1],\cdots,[a,1], [a,a])
 \end{array}\label{Int1}\end{equation}
where the interval $[a, a]$ stands for the point $a$.
Similarly, the solution of  the inequality
\begin{equation}
x_1\wedge x_2\wedge\cdots \wedge x_n\leq a \label{moi}
\end{equation}
has the property that one $x_i$ is no more than $a$ and all others can be assigned with arbitrary values. Thus, the inequality has $n$ interval solutions given by
\begin{equation}\begin{array}{c}
                  ([0,a], [0,1],[0,1],\cdots, [0,1]), \\
                   ([0,1], [0,a],[0,1],\cdots, [0,1]), \\
                    \vdots \\
                   ([0,1], [0,1],\cdots,[0,1], [0,a]).
                 \end{array}\label{Int2}
\end{equation}

Next we consider  an equation with a polynomial:
\begin{equation}
P=M_1\vee M_2\vee\cdots \vee M_k=a. \label{poly}
\end{equation}
where $M_k$'s are monomials. Eq. (\ref{poly}) is solvable if and only if one monomial  $M_i$ equals $a$ and others are no more than $a$. Therefore there are $k$ possible cases. For example, one case is as follows:
\begin{equation}
  \begin{cases}
    M_1 = a\\
  M_2 \leq a\\
~~~~~~\vdots\\
M_k \leq a
  \end{cases}\label{SFE2}
\end{equation}
 We can search for all possible interval solutions for each  equation or inequality in (\ref{SFE2}) as we dealt with (\ref{moe}) and (\ref{moi}).

Let $n_i$ be the number of variables in $M_i$ and let $n$ be the total number of variables in (\ref{SFE2}). Then $n_i\leq n$ for $i=1,\cdots, k$.
 Here we need extend a solution on $n_i$ variables to one on $n$ variables. For example, the first equation in (\ref{SFE2}) has only $n_1$ variables and thus its interval solution is an $n_1$-tuple $\mathcal{X}=(X_1,\cdots, X_{n_1})$ with $X_i\subseteq L$. Assume that all variables in (\ref{SFE2}) are ordered by some order, for example the lexicographic order. Without loss of generality assume that the variables involved in $M_1$ are ordered at the head.
 Then the  variables that appear in (\ref{SFE2}) but not in  $M_1$  have no impact on the first equation  in (\ref{SFE2}) and can be chosen arbitrarily from $L$. Therefore, $\mathcal{X}$ can be extended to the following form
\begin{equation}
\overline{\mathcal{X}}=(X_1,\cdots, X_{n_1}, [0,1],\cdots,[0,1]).
\end{equation}
By this way, an interval solution for each equation or inequality of (\ref{SFE2}) will be an $n$-dimensional interval vector.
Let $S_i$ be the set of interval solutions of the $i$th formula in (\ref{SFE2}).
Then the  system (\ref{SFE2}) has a set of interval solutions  $S={S_1}\star\cdots \star {S_k}$ with $|S|\leq n_1.n_2\cdots n_k\leq n^k$.

Note that the equality (\ref{poly}) has $k$ cases to be verified. Therefore, the number of interval solutions of Eq. (\ref{poly}) is upper bounded by $kn^k$.

Now we return to the solution of system (\ref{SFE}). Let the $i$th equation have a set of interval solutions $T_i$ where $i=1,\cdots, m$. Then the interval solution set of (\ref{SFE}) is $T=T_1\star\cdots \star T_m$. Assume that the system (\ref{SFE}) involves $n$ variables and each polynomial consists of at most $k$ monomials. Then we have $|T|\leq |T_1|\cdots | T_m|\leq (kn^k)^m$.
Consequently, we obtain the following theorem.
\begin{thm}
There exists an algorithm running in time $O((kn^k)^m)$ to search for all possible solutions to a system of fuzzy polynomial equations like (\ref{SFE}) and thus it is decidable whether the system is solvable, where $n$ is the number of variables, $m$ is  the number of equations, and $k$ is the largest number of monomials involved in an equation.\label{thm-sfe1}
\end{thm}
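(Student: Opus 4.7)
The plan is to build up the search algorithm in three stages and then bound the total number of interval solutions produced. The whole argument leans crucially on the hypothesis that $\mathbf{L}$ is totally ordered, so that $\wedge$ and $\vee$ become $\min$ and $\max$ respectively.

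For a single monomial equation $x_1 \wedge \cdots \wedge x_n = a$, rewriting $\wedge$ as minimum shows the equation holds iff some coordinate equals $a$ and all others are $\geq a$; this yields exactly the $n$ interval solutions of (\ref{Int1}). Dually, the monomial inequality $x_1 \wedge \cdots \wedge x_n \leq a$ is solved by the $n$ interval vectors in (\ref{Int2}). For a single polynomial equation $M_1 \vee \cdots \vee M_k = a$, the join being a maximum forces some monomial to equal $a$ while the others are $\leq a$; splitting on which monomial attains $a$ gives $k$ subsystems, each a conjunction of one monomial equation and $k-1$ monomial inequalities. Combining the per-clause interval solutions via the $\star$-product, after padding absent variables with $[0,1]$ so that every interval vector lives in the ambient $n$-variable space, yields at most $n \cdot n^{k-1} = n^k$ solutions per case, and hence at most $kn^k$ for the polynomial equation.

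Finally, for the full system (\ref{SFE}) I would take $T = T_1 \star \cdots \star T_m$, where $T_i$ is the interval solution set of the $i$-th equation; the multiplicative bound $|S_1 \star S_2| \leq |S_1| \cdot |S_2|$ from Section~\ref{Sec-pre} gives $|T| \leq (kn^k)^m$, and solvability reduces to checking whether some element of $T$ has every coordinate non-empty, which fits within the same time budget. The step I expect to require the most care is verifying that the interval vectors listed at each stage are truly \emph{maximum}, so that no point solution is lost once $\star$-intersections are taken, and that padding absent variables with $[0,1]$ is consistent with the definition of interval solution on the ambient space; both follow, however, from total-orderedness together with the fact that variables absent from a clause impose no constraint on it.
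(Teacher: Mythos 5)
Your proposal is correct and follows essentially the same route as the paper: interval solutions for monomial equations and inequalities, a $k$-way case split on which monomial attains the value $a$, padding absent variables with $[0,1]$, combining via the $\star$-product, and the resulting bound $|T|\leq (kn^k)^m$ with solvability checked by non-emptiness of every coordinate. No substantive divergence from the paper's argument.
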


 \begin{proof} Using the procedure described above, we can exhaustively search for all possible solutions of the system. The number of interval solutions is at most $(kn^k)^m$. So the time is $O((kn^k)^m)$. Note that the system is solvable if and only if there exists an interval solution in the form $(X_1,\cdots, X_n)$ where $X_i\neq\emptyset$ for all $i$. Thus, after finding all possible interval solutions, we check if there exists a solution satisfying the above condition. If yes, then the system is solvable. Otherwise, it is unsolvable.
\end{proof}

\begin{Rm} In the above discussion, it was assumed that  a system consists of only equations and all involved polynomials take $1$ as coefficients. In fact, the above procedure can be extended to more general cases. Firstly, it can be easily adopted to deal with systems of inequalities and equations with $1$ as coefficients. Secondly, by the similar idea, we can also deal with the most general case---systems consist of inequalities and equations with arbitrary coefficients. However, we will not discuss these general cases in any more detail, since they have no close relation  with our main purpose in this paper---minimization of fuzzy finite automata.\end{Rm}

Theorem \ref{thm-sfe1} has presented a procedure to search for all solutions to  systems like (\ref{SFE}). However, if we only want to determine whether the system is solvable or not, but not to search for all solutions, then we can present a simpler procedure for that. Let ${\cal V}$ denote the set consisting of distinct elements from $\{a_1, \cdots, a_m\}$.
Then we get the following result.
\begin{thm}\label{thm-sfe2}The system (\ref{SFE}) is solvable if and only if there exists  $X^0\in \mathcal{V}^n$ such that $X^0$ is a point solution of (\ref{SFE}) where $n$ is the number of variables involved in the system.
\end{thm}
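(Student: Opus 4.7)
The ``if'' direction is trivial: any point solution suffices, and a tuple in $\mathcal{V}^n$ is in particular a tuple in $L^n$. So the entire content lies in the ``only if'' direction, which I would prove by constructing an $X^0\in\mathcal{V}^n$ from an arbitrary point solution $Y=(y_1,\dots,y_n)$ of (\ref{SFE}) via a ``round down to $\mathcal{V}$'' map.

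The plan is to define
\begin{equation*}
x^0_i=\begin{cases} \max\{v\in\mathcal{V}: v\leq y_i\}, & \text{if this set is nonempty,}\\ \min \mathcal{V}, & \text{otherwise,}\end{cases}
\end{equation*}
which is well-defined because $L$ is totally ordered and $\mathcal{V}$ is finite, and then check that each equation $P_j=a_j$ is preserved. Fix such a $j$ and write $P_j=\bigvee_{i}M_i^{(j)}$. Because $L$ is totally ordered, the equality $\bigvee_{i}M_i^{(j)}(Y)=a_j$ forces some monomial $M^{*}\in P_j$ to \emph{achieve} the value $a_j$, while all remaining monomials in $P_j$ evaluate to at most $a_j$ on $Y$. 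The verification splits accordingly into two parts.

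First, for the achieving monomial $M^{*}=x_{s_1}\wedge\cdots\wedge x_{s_p}$: every $y_{s_l}\geq a_j$ and at least one $y_{s_{l_0}}=a_j$. Since $a_j\in\mathcal{V}$, this element lies in $\{v\in\mathcal{V}: v\leq y_{s_l}\}$, so $x^0_{s_l}\geq a_j$, and for the index $l_0$ we get $x^0_{s_{l_0}}=a_j$; hence $M^{*}(X^0)=a_j$. Second, for any other monomial $M=x_{t_1}\wedge\cdots\wedge x_{t_q}$ in $P_j$, some coordinate $y_{t_{l}}\leq a_j$; in either branch of the definition of $x^0_{t_l}$, the value is bounded by $a_j$ (using $\min\mathcal{V}\leq a_j$ in the default case), giving $M(X^0)\leq a_j$. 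Combining, $P_j(X^0)=a_j$.

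The step I expect to be the main obstacle is the default case $y_i<\min\mathcal{V}$, where the map actually \emph{raises} $y_i$ to $\min\mathcal{V}$ rather than lowering it. I would resolve this by observing that such an $x_i$ cannot appear in any achieving monomial: a variable of an achieving monomial $M^{*}$ for equation $j$ must satisfy $y_i\geq a_j\geq\min\mathcal{V}$, contradicting $y_i<\min\mathcal{V}$. Hence raising $y_i$ to $\min\mathcal{V}$ can only affect \emph{non}-achieving monomials, and for those the bound $x^0_i=\min\mathcal{V}\leq a_j$ keeps the monomial below $a_j$, so no equation is broken. This is the only delicate point; once it is handled, the construction immediately yields a point solution in $\mathcal{V}^n$.
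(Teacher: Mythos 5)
Your proof is correct, but it takes a genuinely different route from the paper's. The paper derives the ``only if'' direction from its interval-solution machinery (the analysis behind Theorem~\ref{thm-sfe1}): a solvable system has a nonempty interval solution $(X_1,\dots,X_n)$, each component of which is an intersection of basic intervals of the forms in (\ref{Int1}) and (\ref{Int2}) and hence has the shape $[a,b]$, $[0,c]$, $[d,1]$ or $[0,1]$ with endpoints in $\mathcal{V}$; one then selects an endpoint of each component. You instead start from an \emph{arbitrary} point solution $Y$ and round each coordinate down to $\mathcal{V}$, verifying equation by equation that the achieving monomial still achieves $a_j$ (because $y_i\geq a_j$ implies $x^0_i\geq a_j$ and $y_i=a_j$ implies $x^0_i=a_j$, using $a_j\in\mathcal{V}$) while every non-achieving monomial stays at most $a_j$ (because $y_i\leq a_j$ implies $x^0_i\leq a_j$ in both branches of your definition). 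Your treatment of the default branch $y_i<\min\mathcal{V}$ is sound --- and in fact the case split you flag as delicate is already absorbed by the inequality $x^0_i=\min\mathcal{V}\leq a_j$, since such a variable is only ever needed as the ``small'' witness of a non-achieving monomial. What your approach buys is self-containedness and rigor: it does not depend on the (somewhat informally justified) claim that every solvable system admits an interval solution built from the basic intervals, and the three monotonicity facts you use hold globally for a single assignment, so there is no risk of conflicting requirements across equations. What the paper's approach buys is reuse: having already built the exhaustive interval-search procedure for Theorem~\ref{thm-sfe1}, the selection of $\mathcal{V}$-valued endpoints comes essentially for free.
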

\begin{proof} First  the ``if'' part is ready. Thus we focus on the ``only if'' part. Suppose that the system (\ref{SFE}) is solvable. Then it necessarily has an interval solution $(X_1,\cdots, X_n)$ where $X_i\neq\emptyset$ for all $i$. More specifically, since every interval $X_i$ is the intersection of intervals like the ones given in (\ref{Int1}) and (\ref{Int2}), every interval $X_i$ is finally among the following forms $[a,b], [0,c], [d,1], [0,1]$ where $a,b,c,d\in{\cal V}$. Thus, we  get a point solution $(x^0_1,\cdots, x^0_n)$ where each $x^0_i$ is given by
\begin{equation}x_i=\left\{
  \begin{array}{ll}
    a \text{~or~} b, & \hbox{if $X_i=[a,b]$;} \\
    c, & \hbox{if $X_i=[0,c]$;} \\
    d, & \hbox{if $X_i=[d,1]$;} \\
    y\in \mathcal{V}, & \hbox{if $X_i=[0,1]$}
  \end{array}
\right.\end{equation}
where $y$ is an arbitrary value from $\mathcal{V}$.
It is obvious that $(x^0_1,\cdots, x^0_n)\in \mathcal{V}^n$. Hence, we have completed the proof.
\end{proof}
Having in mind Theorem \ref{thm-sfe2} we propose the following algorithm (Algorithm I) for establishing the solvability of  systems like (\ref{SFE}).
\begin{center}

 \fbox{\parbox{10cm}{
 {\small\vskip 1mm
{\bf Input:} a system like (\ref{SFE}) with $n$ variables\\
{\bf output:} give a solution or return ``unsolvable''\\
{\bf Step 1:} \begin{quote} find the set $\mathcal{V}$ from $\{a_1, \cdots, a_m\}$.\end{quote}
{\bf Step 2:}  \begin{quote} {\bf While} (take an element $X^0$ from $\mathcal{V}^n$) {\bf do}\\
 If ($X^0$ is a solution of (\ref{SFE})) return $X^0$
\end{quote}
{\bf Step 3:}
 \begin{quote} return  ``unsolvable''
\end{quote}

}}}

 Algorithm I: determine whether a system of fuzzy polynomial equations like (\ref{SFE}) is solvable. \vskip 1mm
\end{center}

In Section 5, we will apply the above results to address the minimization problem of fuzzy  automata.

\section{Equivalence of fuzzy  automata}\label{Sec-eqv}
Below we first give the definition of fuzzy  automata and then present a condition for two fuzzy  automata being equivalent which is an indispensable base for the minimization problem. Some notations used in the sequel are firstly explained here.
 Let $\Sigma^*$ denote the free monoid over the alphabet $\Sigma$, and let $\lambda\in\Sigma^*$ be the empty word. For $x\in \Sigma^*$, $|x|$ denotes its length. Let $\Sigma^{\leq k}=\{x\in\Sigma^*: |x|\leq k\}$.

\begin{Df}\label{df-ffa}A fuzzy finite automaton over  $\mathbf{L}=(L,\wedge,\vee, 0, 1)$, simply  fuzzy automaton, is a five-tuple ${\cal A}=(S,\Sigma, \pi,\delta, \eta)$ where $S$ is a finite state set, $\Sigma$ is a finite alphabet, $\pi$  is a fuzzy subset of $S$, called the fuzzy set of initial states, $\delta: S\times \Sigma \times S\rightarrow L$ is a fuzzy subset of $S\times \Sigma \times S$, called  the fuzzy  transition function, $\eta$  is a fuzzy subset of $S$, called the fuzzy set of terminal states.
\end{Df}

We can interpret $\delta(s, \sigma, s')$ as the degree to which an input letter $\sigma\in\Sigma $ causes a transition from a state $s\in S$ into a state $s'\in S$.
 The mapping $\delta$ can be extended
up to a mapping $\delta^*: S\times \Sigma^* \times S\rightarrow L$ as follows:
\begin{equation}
\delta^*(s,\lambda, s')=\left\{
                         \begin{array}{ll}
                           1, & \hbox{if $s=s'$,} \\
                           0, & \hbox{otherwise}
                         \end{array}
                       \right.
\end{equation}
 where $s,s'\in S$, and
 \begin{equation}
\delta^*(s,x\sigma, s')=\bigvee_{t\in S} \delta^*(s,x, t)\wedge\delta(t,\sigma,s')
 \end{equation}
where $s,s'\in S$, $x\in \Sigma^*$ and $\sigma\in\Sigma$.

A {\it fuzzy language} in $\Sigma^*$ over $\mathbf{L}$, or briefly a fuzzy language, is a fuzzy subset of $\Sigma^*$, i.e., a mapping from $\Sigma^*$ into $L$. $\mathfrak{F}(\Sigma^*)$ stands for the set of all fuzzy languages in  $\Sigma^*$. A fuzzy automaton ${\cal A}=(S,\Sigma, \pi,\delta, \eta)$  {\it recognizes} a fuzzy language $f\in \mathfrak{F}(\Sigma^*)$ if for any $x\in \Sigma^*$
 \begin{equation}f(x)=\bigvee_{s,t\in S} \pi(s)\wedge \delta^*(s,x,t)\wedge \eta(t)\label{language}.\end{equation}
 In other words, the equality (\ref{language}) means that the membership degree of the word $x$ belonging to the fuzzy language $f$ is equal to the
degree to which ${\cal A}$ recognizes or accepts the word $x$. The unique fuzzy language recognized by a fuzzy automaton ${\cal A}$ is denoted by $f_{\cal A}$.

Let $|S|=n$. For each word $x\in\Sigma^*$, we define an $n\times n$ fuzzy matrix $\delta_x$ by
\begin{equation}\delta_x(s,s')=\delta^*(s,x,s')
\end{equation}
where $s,s'\in S$. Then it is readily seen that
\begin{equation}\delta_x=\delta_{x_1}\circ \cdots \circ \delta_{x_k}
\end{equation}
for $x=x_1\cdots x_k\in\Sigma^*$. By using this notation,  the equality (\ref{language}) can be rewritten as
\begin{equation}f(x)=\pi\circ \delta_x\circ \eta\end{equation}
where and hereafter we identify the fuzzy subsets $\pi, \eta$  with a $1\times n$ and an $n\times 1$ fuzzy matrices, respectively.

A basic problem concerning fuzzy automata is the equivalence problem, that is, deciding whether two given fuzzy automata recognize the same fuzzy language.    Formally the definition is given as follows.
\begin{Df}
Two fuzzy automata ${\cal A}_1$ and ${\cal A}_2$ are said to be $k$-equivalent, denoted by ${\cal A}_1\approx_k {\cal A}_2$, if $f_{{\cal A}_1}(x)=f_{{\cal A}_2}(x)$ holds for all $x\in\Sigma^{\leq k}$. Furthermore, they are said to be equivalent, denoted by ${\cal A}_1\approx {\cal A}_2$, if  $f_{{\cal A}_1}(x)=f_{{\cal A}_2}(x)$ holds for all $x\in\Sigma^*$.
\end{Df}
In the following, we give a condition for two fuzzy automata being equivalent.
\begin{thm}\label{thm-eqv} Two  fuzzy automata  ${\cal A}_i=(S^{(i)},\Sigma, \pi^{(i)}, \delta^{(i)},\eta^{(i)})$ $(i=1,2)$ are equivalent if and only if they are $(d^{(n_1+n_2)}-1)$-equivalent, where $n_1$ and $n_2$ are numbers of states of ${\cal A}_1$ and ${\cal A}_2$, respectively, and $d$ is the cardinality of the following set:
\begin{eqnarray*}V&=&\{\delta^{(1)}(s,\sigma,t): s,t\in S^{(1)},\sigma\in\Sigma\}\cup\{\eta^{(1)}(s): s\in S^{(1)}\}\\
&\cup&\{\delta^{(2)}(s,\sigma,t): s,t\in S^{(2)},\sigma\in\Sigma\}\cup\{\eta^{(2)}(s): s\in S^{(2)}\}.\end{eqnarray*}
\end{thm}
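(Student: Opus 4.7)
The ``only if'' direction is trivial. For ``if'', my plan is the classical pigeonhole argument on ``configuration'' vectors, adapted to the fuzzy setting by exploiting that $\mathbf{L}$ is totally ordered.

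First I would introduce, for each word $y\in\Sigma^*$ and $i\in\{1,2\}$, the backward vector $u_y^{(i)}\in L^{n_i}$ defined by
\[u_y^{(i)}(s)=\bigvee_{t\in S^{(i)}}\delta^{(i)*}(s,y,t)\wedge\eta^{(i)}(t),\qquad s\in S^{(i)},\]
so that $f_{\mathcal{A}_i}(xy)=\pi^{(i)}\circ\delta_x^{(i)}\circ u_y^{(i)}$ for every $x\in\Sigma^*$. Choosing the backward rather than the forward direction is essential: only $\delta^{(i)}$- and $\eta^{(i)}$-values contribute to the entries, which matches the set $V$ (which deliberately omits the $\pi^{(i)}$-values).

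The next step is the observation that, because $\mathbf{L}$ is totally ordered, $\wedge$ is $\min$ and $\vee$ is $\max$; hence every finite $\vee$-$\wedge$-combination of elements of $V$ is itself an element of $V$. Expanding $\delta^{(i)*}(s,y,t)$ as a supremum of infima of $\delta^{(i)}$-values, I conclude that $u_y^{(i)}(s)\in V$ for all $s,y,i$. Therefore the stacked vector $(u_y^{(1)},u_y^{(2)})\in V^{n_1+n_2}$ attains at most $d^{n_1+n_2}$ distinct values as $y$ ranges over $\Sigma^*$. Now, assuming $\mathcal{A}_1\approx_{d^{n_1+n_2}-1}\mathcal{A}_2$, suppose for contradiction that the automata are inequivalent and let $z$ be a shortest witness, so $|z|\geq d^{n_1+n_2}$. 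The $|z|+1\geq d^{n_1+n_2}+1$ suffixes of $z$ (including $\lambda$) then produce, by pigeonhole, two suffixes $y$ and $y'$ with $|y|<|y'|$ satisfying $u_y^{(1)}=u_{y'}^{(1)}$ and $u_y^{(2)}=u_{y'}^{(2)}$. Writing $z=xy'$ and $y'=\beta y$, put $z'=xy$; then $|z'|<|z|$ and
\[f_{\mathcal{A}_i}(z)=\pi^{(i)}\circ\delta_x^{(i)}\circ u_{y'}^{(i)}=\pi^{(i)}\circ\delta_x^{(i)}\circ u_y^{(i)}=f_{\mathcal{A}_i}(z')\]
for $i=1,2$. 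Hence $z'$ is also a witness of inequivalence, contradicting either the minimality of $z$ (when $|z'|\geq d^{n_1+n_2}$) or the assumed $(d^{n_1+n_2}-1)$-equivalence (when $|z'|\leq d^{n_1+n_2}-1$).

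The step I expect to deserve the most care is the closure claim $u_y^{(i)}(s)\in V$: one must unfold the definition of $\delta^{(i)*}$ and verify that entries never drift into a larger set such as $V\cup\{\pi^{(i)}\text{-values}\}$. This is precisely where the totally ordered hypothesis on $\mathbf{L}$ is indispensable; over a general (non-chain) lattice the closure fails and the bound $d^{n_1+n_2}$ would no longer be correct. The remainder of the argument, being a straightforward pigeonhole on a finite set of vectors, should go through with only routine bookkeeping.
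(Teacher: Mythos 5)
Your proof is correct and follows essentially the same route as the paper's: your stacked suffix vectors $(u_y^{(1)},u_y^{(2)})$ are exactly the paper's vectors $M(y)\circ\eta$ built from the direct sum $\delta^{(1)}_\sigma\oplus\delta^{(2)}_\sigma$, and both arguments hinge on the same key observation that the total order confines the entries of these vectors to $V$, yielding the bound $d^{n_1+n_2}$ on their number. The only difference is cosmetic: you exploit finiteness via pigeonhole on the suffixes of a shortest counterexample, whereas the paper shows that the chain $\varphi(0)\subseteq\varphi(1)\subseteq\cdots$ of suffix-vector sets stabilizes after at most $d^{n_1+n_2}-1$ steps.
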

\begin{proof}Let $M(\sigma)=\delta^{(1)}_{\sigma}\oplus \delta^{(2)}_{\sigma}$ for all $\sigma\in\Sigma$, and $\eta=\left(
                                                                                                          \begin{array}{c}
                                                                                                            \eta^{(1)} \\
                                                                                                            \eta^{(2)} \\
                                                                                                          \end{array}
                                                                                                        \right)
$, being an $(n_1+n_2)\times 1$ matrix.  Then we have
\begin{eqnarray}
& &f_{{\cal A}_1}(x)=\pi^{(1)}\circ \delta^{(1)}_x\circ \eta^{(1)}=(\pi^{(1)},\mathbf{0})\circ M(x)\circ\eta,\\
& &f_{{\cal A}_2}(x)=\pi^{(2)}\circ \delta^{(2)}_x\circ \eta^{(2)}=(\mathbf{0},\pi^{(2)})\circ M(x)\circ\eta
\end{eqnarray}
where $M(x)=M(x_1)\circ\cdots \circ M(x_k)$ for  $x=x_1\cdots x_k\in \Sigma^*$ and $\mathbf{0}$ denotes a row vector with all elements being $0$. Let $$\varphi(k)=\{M(x)\circ\eta: x\in \Sigma^{\leq k}\}$$ and $$\varphi=\{M(x)\circ\eta: x\in \Sigma^*\}.$$ Then  the following result holds.
\begin{Pp}There exists an integer $l\leq d^{n_1+n_2}-1$ such that $$\varphi(0)\subset \varphi(1)\subset \cdots \subset \varphi(l)=\varphi(l+1)=\varphi(l+2)= \cdots=\varphi$$ \label{Prop1}
where $\subset$ denotes the proper inclusion.\end{Pp}
{\noindent\bf Proof of Proposition \ref{Prop1}}. First it is obvious that $\varphi(i)\subseteq\varphi(i+1)\subseteq \varphi$ holds for $i=0,1,\cdots$. Note that $|\varphi|\leq d^{n_1+n_2}$. Thus, there exists $l\leq|\varphi|-1\leq d^{n_1+n_2}-1$ such that $\varphi(l)=\varphi(l+1)$. Next we prove that $\varphi(l)=\varphi(l+j)$ holds for $j=2,3,\cdots$. In fact we only need prove $\varphi(l)=\varphi(l+2)$. Take $\upsilon\in \varphi(l+2)$. We have
 \begin{eqnarray*}v&=&M(x)\circ\eta, ~~~~~~~~~~~~~~x\in \Sigma^{\leq(l+2)} \\
&=&M(\sigma)\circ M(x')\circ\eta, ~~~~~\sigma\in\Sigma, x'\in\Sigma^{\leq(l+1)} \\
&=&M(\sigma)\circ v',~~~~~~~~~~~~~v'\in\varphi(l)\\
&\in&\varphi(l+1)=\varphi(l).
\end{eqnarray*}
This completes the proof of Proposition \ref{Prop1}.

By the above results  we obtain
\begin{eqnarray*}
{\cal A}_1\approx {\cal A}_2&\Leftrightarrow&f_{{\cal A}_1}(x)=f_{{\cal A}_2}(x),\forall x\in\Sigma^*\\
&\Leftrightarrow& (\pi^{(1)},\mathbf{0})\circ M(x)\circ\eta=(\mathbf{0},\pi^{(2)})\circ M(x)\circ\eta, \forall x\in\Sigma^*\\
&\Leftrightarrow& (\pi^{(1)},\mathbf{0})\circ v=(\mathbf{0},\pi^{(2)})\circ v, \forall v\in\varphi\\
&\Leftrightarrow& (\pi^{(1)},\mathbf{0})\circ v'=(\mathbf{0},\pi^{(2)})\circ v', \forall v'\in\varphi(k)\\
&\Leftrightarrow& f_{{\cal A}_1}(x)=f_{{\cal A}_2}(x), \forall x\in\Sigma^{\leq k}\\
&\Leftrightarrow&{\cal A}_1\approx_k {\cal A}_2
\end{eqnarray*}
where $k=d^{n_1+n_2}-1$. Thus we have completed the proof of Theorem \ref{thm-eqv}.\end{proof}
\section{Minimization of fuzzy  automata}\label{Sec-min}
The study of the minimization problem for finite automata dates back to the early beginnings of automata theory. This problem is  of practical relevance, because
regular languages are used in many applications, and one may like to represent the
languages succinctly. In the past over forty years, fuzzy automata have received much attention.  In the study of fuzzy automata, the minimization problem is also a very important problem  and worthy of serious consideration.  Note that a closely related problem to minimization is the state reduction problem which does not aim at a minimal one, but find ``reasonable'' small  automata which can be efficiently constructed. As mentioned in Section 1, while much attention has been paid to  reduction of fuzzy automata, the minimization problem is almost untouched. Thus in this section we focus on the minimization problem.
 Formally, the decision version of the minimization problem of fuzzy automata,  is  as follows:
\begin{itemize}
  \item Given a fuzzy automaton $\mathcal{A}$ and a natural number $k$, that is, a pair $\langle \mathcal{A}, k\rangle$,  is there a $k$-state fuzzy automaton equivalent to  $\mathcal{A}$?
\end{itemize}

In this section, we  prove the above problem  is decidable, based on the solvability of systems of fuzzy polynomial equations (Theorem \ref{thm-sfe2}) and on the condition of equivalence between fuzzy automata (Theorem \ref{thm-eqv}). Furthemore, we point out that the  minimization problem is at least as hard as PSPACE-complete.
\subsection{Decidability of minimization}

  The main idea for addressing the  minimization problem of fuzzy automata  is briefly depicted as follows:

\begin{enumerate}
  \item  First, for a given pair $\langle \mathcal{A}, k\rangle$, define the set
\begin{equation}
\mathbb{S}(\mathcal{A}, k)=\{{\cal A}':  {\cal A}'\approx {\cal A}, \text{~is a fuzzy automaton with~} k \text{~states} \}. \label{Set}
\end{equation}
  \item Next,   we  prove that $\mathbb{S}(\mathcal{A}, k)$  can be described by a system of  fuzzy polynomial equations.  Then, by Theorem \ref{thm-sfe2}, there exists an algorithm to decide whether $\mathbb{S}(\mathcal{A}, k)$ is nonempty or not, and furthermore, if it is nonempty, a $k$-state fuzzy automaton $\mathcal{A}'$ equivalent to $\mathcal{A}$ is given.
  \end{enumerate}

Specifically, we have the following result.
\begin{thm}The minimization problem of fuzzy automata is decidable. \label{thm-min}
\end{thm}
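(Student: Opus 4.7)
My plan is to carry out the two-step reduction previewed right before the theorem: exhibit $\mathbb{S}(\mathcal{A},k)$ as the solution set of a \emph{finite} system of fuzzy polynomial equations of the type treated in (\ref{SFE}), and then invoke Theorem \ref{thm-sfe2} (equivalently, Algorithm I). First I would parameterize a candidate $k$-state fuzzy automaton $\mathcal{A}'=(S',\Sigma,\pi',\delta',\eta')$ by treating the $2k+k^2|\Sigma|$ entries of $\pi',\delta',\eta'$ as unknowns valued in $L$. Using the matrix reformulation $f_{\mathcal{A}'}(x)=\pi'\circ\delta'_x\circ\eta'$ and expanding $\delta'_x=\delta'_{x_1}\circ\cdots\circ\delta'_{x_l}$ through (\ref{Product}), the right-hand side becomes, for each fixed word $x$, a fuzzy polynomial in the unknowns whose every monomial has coefficient $1$. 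Hence each required identity $f_{\mathcal{A}'}(x)=f_{\mathcal{A}}(x)$ is an equation of the exact form of (\ref{SFE}).

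The next step is to truncate the infinite family $\{f_{\mathcal{A}'}(x)=f_{\mathcal{A}}(x):x\in\Sigma^{*}\}$ to a finite subfamily. The natural tool is Theorem \ref{thm-eqv}, but it has a circular feature: the bound $d^{n_1+n_2}-1$ depends on the entries of $\mathcal{A}'$, which are still undetermined. The idea that breaks the circularity is to use the ``confinement to $\mathcal{V}^n$'' supplied by Theorem \ref{thm-sfe2}, combined with the fact that $\mathbf{L}$ is totally ordered, so that $\wedge=\min$ and $\vee=\max$ introduce no new values. Concretely, let $V_{\mathcal{A}}$ denote the (finite) set of distinct entries appearing in $\pi,\delta,\eta$; then every value $f_{\mathcal{A}}(x)$ lies in $V_{\mathcal{A}}$. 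Setting $d_0:=|V_{\mathcal{A}}|$ and $K:=d_0^{\,n+k}-1$, I would define the candidate system to consist of one equation $\pi'\circ\delta'_x\circ\eta'=f_{\mathcal{A}}(x)$ for each $x\in\Sigma^{\leq K}$.

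It then remains to check the two implications. In one direction, any $\mathcal{A}'\in\mathbb{S}(\mathcal{A},k)$ satisfies $\mathcal{A}'\approx\mathcal{A}$ by definition, hence makes every one of the chosen equations hold, so the system is solvable. In the converse direction, if the system is solvable then Theorem \ref{thm-sfe2} yields a point solution whose coordinates lie in the set $\mathcal{V}$ of distinct right-hand sides, and $\mathcal{V}\subseteq V_{\mathcal{A}}$. The induced $k$-state automaton $\mathcal{A}''$ is therefore $K$-equivalent to $\mathcal{A}$ and has all entries in $V_{\mathcal{A}}$. Applying Theorem \ref{thm-eqv} \emph{a posteriori} to the pair $(\mathcal{A},\mathcal{A}'')$, the relevant value set has cardinality at most $d_0$, so the threshold $d^{n_1+n_2}-1\leq d_0^{\,n+k}-1=K$ is already met, which upgrades $K$-equivalence to full equivalence and gives $\mathcal{A}''\in\mathbb{S}(\mathcal{A},k)$. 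Decidability of $\mathbb{S}(\mathcal{A},k)\neq\emptyset$ follows by executing Algorithm I on the finite system, and a positive run actually outputs a witness $\mathcal{A}''$.

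The main obstacle is precisely the circular dependency between the equivalence-bound $K$ and the unknown entries of $\mathcal{A}'$; without an a priori confinement of the candidate values, $K$ cannot even be written down. The resolution is the interplay between the two ingredients of the paper: Theorem \ref{thm-sfe2} confines any solution of the polynomial system to a finite lattice of scalars read off from the target, and Theorem \ref{thm-eqv} can then be invoked post hoc on the normalized solution. A minor book-keeping point I would be careful about is that $V$ in Theorem \ref{thm-eqv} only collects $\delta$- and $\eta$-values, so I would verify that the chosen $K$ still dominates the effective $d$ once the $\pi''$ coordinates are ignored; this is automatic since $d_0$ was defined using all entries of $\mathcal{A}$.
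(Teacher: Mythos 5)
Your proposal is correct and follows essentially the same route as the paper: encode a candidate $k$-state automaton by $2k+|\Sigma|k^2$ unknowns, truncate the infinite family of equations at length $|V|^{n+k}-1$, use Theorem \ref{thm-sfe2} to confine any solution to the finite value set read off from $\mathcal{A}$, and then apply Theorem \ref{thm-eqv} a posteriori to upgrade truncated equivalence to full equivalence before deciding solvability with Algorithm I. The only difference is cosmetic: the paper inserts an intermediate system (\ref{SE2}) with the inflated bound $d=|V|+|\Sigma|k^2+k$ before reducing to (\ref{SE3}), whereas you pass directly to the $|V|^{n+k}-1$ truncation; your handling of the circularity and of the fact that Theorem \ref{thm-eqv} ignores $\pi$-values matches the paper's Claims 1 and 2.
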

\begin{proof} For a given $n$-state fuzzy automaton ${\cal A}=(S,\Sigma, \pi, \delta, \eta)$  and  a natural number $k$, define the set $\mathbb{S}(\mathcal{A}, k)$ given by (\ref{Set}). Suppose that a $k$-state fuzzy automaton has the form ${\cal A}'=(S',\Sigma, \pi', \delta', \eta')$ with $|S'|=k$. Equivalently, $\mathbb{S}(\mathcal{A}, k)$  can be represented as follows:
\begin{equation}
\mathbb{S}(\mathcal{A}, k)=\{X\in L^{2k+|\Sigma|k^2}: P(X)\}\label{set}
\end{equation}
where:
\begin{itemize}
  \item  $X$ is a vector consisting of  $(2k+|\Sigma|k^2)$ variables which take values from $L$. More specifically, $X$ is decomposed into several parts:
\begin{itemize}
  \item  $X_{\pi'},X_{\eta'}\in L^k$  are respectively used to represent the initial state $\pi'$ and the final state $\eta'$ of ${\cal A}'$.
  \item $X_{\delta_\sigma}\in L^{k^2}$ for each $\sigma\in \Sigma$ is used to represent a fuzzy transition matrix $\delta_\sigma$ of ${\cal A}'$.
 \end{itemize}
  \item  $P(X)$ is a system of fuzzy polynomial equations given by:
\begin{equation}
X_{\pi'}\circ X_{\delta_x}\circ X_{\eta'}=\pi\circ \delta_x \circ\eta, \forall x\in \Sigma^* \label{SE1}
\end{equation}
where $X_{\delta_x}=X_{\delta_{x_1}}\circ\cdots \circ X_{\delta_{x_m}}$ for $x=x_1\cdots x_m\in\Sigma^*$.\end{itemize}

It is obvious that a point $X^0\in L^{2k+|\Sigma|k^2}$ is a solution to  the system (\ref{SE1}) if and only if it represents a $k$-state fuzzy automaton ${\cal A}'$ equivalent to ${\cal A}$.
However, since (\ref{SE1}) consists of infinitely many equations, there exists no algorithmic procedure to determine whether it is solvable. Thereby  we need reduce (\ref{SE1}) to a  system of finitely many equations.

Let
\begin{equation*}V=\{\pi(s), \eta(s), \delta_\sigma(s,s'): s,s'\in S, \sigma\in\Sigma\}\end{equation*}
and let $$d=|V|+|\Sigma|k^2+k.$$  We consider the following finite system of equations:
\begin{equation}
X_{\pi'}\circ X_{\delta_x}\circ X_{\eta'}=\pi\circ \delta_x \circ\eta, \forall x\in \Sigma^*  \text{~with~} |x|\leq d^{n+k}-1.  \label{SE2}
\end{equation}
Indeed, we obtain the following result.

\noindent{\bf Claim 1.} {\it $X^0\in L^{2k+|\Sigma|k^2}$ is a solution to  the system (\ref{SE1}) if and only if it is a solution to the system (\ref{SE2}).}
\begin{proof}
The ``only if '' part is obvious. Thus we prove the ``if'' part. Suppose that  $X^0\in L^{2k+|\Sigma|k^2}$  is a solution to the system (\ref{SE2}). Assume that the number of different values involved by $\{X^0_{\eta'}, \eta, \delta_\sigma, X^0_{\delta_\sigma}: \sigma\in\Sigma\}$ is $d'$. Then we have $d'\leq d$. Now it follows from Theorem \ref{thm-eqv} that if $X^0_{\pi'}\circ X^0_{\delta_x}\circ X^0_{\eta'}=\pi\circ \delta_x \circ\eta$ holds for $x\in\Sigma^*$ with $|x|\leq (d')^{n+k}-1$, then it also holds for all $x\in \Sigma^*$. Remember that $X^0$ is a solution to the system (\ref{SE2}) and $d'\leq d$. Thus by the above discussion $X^0$ is also a solution to the system (\ref{SE1}). This completes the proof of Claim 1.
\end{proof}

Furthermore, we can reduce the system (\ref{SE2}) to a more simple one. Formally, we have

\noindent{\bf Claim 2.} {\it The system (\ref{SE2}) is solvable if and only if the following system is solvable:}
\begin{equation}X_{\pi'}\circ X_{\delta_x}\circ X_{\eta'}=\pi\circ \delta_x \circ\eta, \forall x\in \Sigma^*  \text{~with~} |x|\leq |V|^{n+k}-1.  \label{SE3}\end{equation}
\begin{proof} First, the ``only if'' part is obvious, since $|V|<d$. Now suppose that the system (\ref{SE3}) is solvable.  Note that the values of the right side of (\ref{SE3})  are necessarily to be in the set $V$.
Therefore, by Theorem 4, the system (\ref{SE3}) is solvable in the set $V$, which means that there exists $X^0\in V^{2k+|\Sigma|k^2}$ such that $X^0_{\pi'}\circ X^0_{\delta_x}\circ X^0_{\eta'}=\pi\circ \delta_x \circ\eta$ holds for $x\in\Sigma^*$ with $|x|\leq |V|^{n+k}-1$. Then, by Theorem \ref{thm-eqv}, the equality holds  for all $x\in \Sigma^*$, and thus it necessarily holds for $x\in \Sigma^*$ with $|x|\leq d^{n+k}-1$, which means that $X^0$ is a solution to the system  (\ref{SE2}).
\end{proof}

 At length, $P(X)$ has been reduced to be a finite one given by (\ref{SE3}). Correspondingly, determining whether the set $\mathbb{S}(\mathcal{A}, k)$ is nonempty is equivalent to determining whether the system (\ref{SE3}) is solvable. The later problem can be addressed in terms of Theorem \ref{thm-sfe2}.

In conclusion the above procedure is refined to  a minimization  algorithm  (Algorithm II). This completes the proof of Theorem \ref{thm-min}.
\end{proof}
\begin{center}

   \fbox{\parbox{12cm}{
 {\small\vskip 1mm
{\bf Input:} a fuzzy automaton ${\cal
A}$ and a natural number $k$\\
{\bf output:} either return a fuzzy automaton ${\cal
A}'$  equivalent to ${\cal
A}$ with $k$ states or return ``empty''\\
{\bf Step 1:}  \begin{quote}construct the set $\mathbb{S}(\mathcal{A}, k)$ given in (\ref{set}) where $P(x)$ is given by (\ref{SE3})\end{quote}
{\bf Step 2:}  \begin{quote}invoke Algorithm I to decide whether  $\mathbb{S}(\mathcal{A}, k)$  is nonempty\end{quote}
{\bf Step 3:}
  \begin{quote}{\bf If} $\mathbb{S}(\mathcal{A}, k)=\emptyset$ return ``empty''; {\bf else} return $X^0\in\mathbb{S}(\mathcal{A}, k)$\end{quote}

}}}

Algorithm II: the minimization algorithm \vskip 1mm
\end{center}

{\noindent\bf Complexity analysis.} Here we take a complexity analysis of the above procedure. In the following we assume that  the two
operations $\vee$ and $\wedge$ can be done in constant time.
For simplicity, let $c= |V|^{n+k}-1$. The number of equations in (\ref{SE3}) is
\[N=1+|\Sigma|+|\Sigma|^2+\cdots+ |\Sigma|^c.\]
To solve the system (\ref{SE3}) we need to compute the values in the right side which can be  efficiently computed  by the following two steps:
\begin{itemize}
  \item first compute the set $\varphi(c)=\{\delta_x\circ\eta: |x|\leq c\}$;
  \item then compute the value $\pi\circ v$ for each $v\in\varphi(c)$.
\end{itemize}
Note that in this first step upon computation of $\delta_x\circ\eta$, computation of $\delta_\sigma\circ \delta_x\circ\eta$ requires $O(n^2)$ operations. Thus, we compute all  $\delta_x\circ\eta$ from $l=0$ to $l=c$ where $l=|x|$. Then the number of operations required is give as follows:
\[
\begin{tabular}{rlcrrrcc}
length & number of operations\\
$l=0$    &$O(1)$\\
$l=1$    &$O(|\Sigma|n^2)$\\
$l=2$    &$O(|\Sigma|^2n^2)$\\
$\cdots$ &$\cdots$\\
$l=c$    &$O(|\Sigma|^cn^2)$\\
\end{tabular}\]
Thus, the  number of  operations required in the first step is $O(Nn^2)$. The second step requires $O(Nn)$ operations. Therefore the computation of the right side of (\ref{SE3}) requires $O(Nn^2)$  operations.

Now in order to determine whether the system (\ref{SE3}) is solvable or not, by Theorem  \ref{thm-sfe2} we need  to check all $X\in V^{2k+|\Sigma|k^2}$. For each $X$, it requires $O(Nk^2)$ operations to compute the left side.

Therefore, the total number of operations required is
\[ O(|V|^{2k+|\Sigma|k^2}Nk^2)+O(Nn^2).\]
\subsection{The minimization problem is at least as hard as PSPACE-complete}
In the last section we have presented a procedure to address the minimization problem of fuzzy automata,  but the complexity  is not good enough. Then  one may ask whether there exists a more efficient algorithm  to the minimization problem. The answer may be yes, but at present what we are sure is that  the minimization problem is at least as hard as PSPACE-complete.

We recall that in Definition \ref{df-ffa},  a fuzzy automaton ${\cal A}=(S,\Sigma, \pi, \delta, \eta)$ is defined on a general totally ordered lattice $\mathbf{L}=(L, \vee, \wedge, 0,1)$. Now we consider a special case where $L=\{0,1\}$. In this case, ${\cal A}$ reduces to an NFA, which follows from the following observations:
\begin{itemize}
  \item $\pi(s_i)=1$ stands for $s_i\in S$  being an initial state, and $\eta(s_i)=1$ indicates $s_i$ being an accepting state.
  \item  $\delta(s_i,\sigma, s_j)=1$ stands for that the current state $s_i$ will change to $s_j$ when the automaton is fed with the symbol $\sigma$.
  \item The language accepted by ${\cal A}$ is $\mathcal{L}=\{x\in\Sigma^*: f_{\cal A}(x)=1\}$.
\end{itemize}
The minimization problem of NFA attracted much attention from the academic community, for example \cite{KW70, Sen92, Jiang93, Mel99, Mel20, Pol05,CC03}. It has been proved that the minimization problem of NFA is PSPACE-complete \cite{Jiang93}.  On can refer to a survey of finite automata \cite{HK11} for learning more information. Therefore, the minimization problem of  fuzzy automata is at least as hard as PSPACE-complete.

\section{Conclusions}
In this paper we have proved for the first time that the minimization problem of fuzzy automata over totally order lattices is decidable. This result depends heavily on, the solvability of a system of fuzzy polynomial equations, a result established in this paper. Two problems worthy of further consideration are listed as follows:
\begin{itemize}
  \item Are there more efficient algorithms for the minimization problem? Although the problem is known to be at least as hard as PSPACE-complete, it is still interesting to search for more efficient algorithms for the problem as in the case of NFAs.
  \item In this paper we have  considered only minimizing fuzzy automata over totally ordered lattices. In the further study, one can consider the problem for more general cases, for example, for fuzzy automata over complete residuated lattices.
\end{itemize}

\subsubsection*{Acknowledgments}
This work is supported in part by the National
Natural Science Foundation (Nos. 61272058, 61073054, 60873055, 61100001), the Natural
Science Foundation of Guangdong Province of China (No.
10251027501000004),  the Fundamental Research Funds for the Central Universities (No. 11lgpy36), the Specialized Research Fund for the Doctoral Program of Higher Education of China
(Nos. 20100171110042, 20100171120051).


\begin{thebibliography}{ABCD}

\bibitem{Wee67} W.G. Wee, ``On generalizations of adaptive algorithm and application of the fuzzy sets concept to pattern classification,'' Ph.D. Thesis, Purdue University, West lafayett, Indiana, USA, June 1967.

    \bibitem{San68} E.S. Santos, ``Maximin automata,'' {\em Inform.  Control}, vol. 12, pp. 367-377,  1968.

\bibitem{San72} E.S. Santos, ``On reduction of max-min machines,'' {\em J. Math. Anal. Appl.}, vol. 37, pp. 677-686,  1972.


\bibitem{LZ69} E. T. Lee and  L. A. Zadeh, ``Note on fuzzy languages,'' {\em Inform. Sci. }, vol. 1, pp. 421-434,  1969.

\bibitem{TM74} M.G. Thomason and P.N. Marinos, ``Deterministic acceptors of regular fuzzy languages,''  {\em IEEE Trans.  Syst., Man, Cybern.},  vol. 4, pp. 228-230,  1974.

\bibitem{DP80} D. Dubois and H. Prade, ``Fuzzy Sets and Systems: Theory and Applications,'' New York, USA: Academic Press,  1980.


\bibitem{GSG77} M. M. Gupta, G. N. Saridis, and B. R. Gaines, ``Fuzzy Automata and Decision Processes,'' New York, USA: North-Holland, 1977.

\bibitem{KL97} A. Kandel and S.C. Lee, ``Fuzzy Switching and Automata: Theory and Applications,''  New York, USA: Crane Russak, 1997.

\bibitem{MM00} D.S. Malik and J.N. Mordeson, ``Fuzzy Discrete Structures,''  New York, USA: Physica-Verlag, 2000.

\bibitem{MM02} J.N. Morderson, D.S. Malik, ``Fuzzy Automata and Languages: Theory and Applications,''  London, Boca Raton, FL: Chapman \& Hall, CRC, 2002.

\bibitem{Qiu04} D.W. Qiu, ``Characterizations of fuzzy finite automata,'' {\em Fuzzy Sets Syst.}, vol. 141, pp. 391-414, 2004.


\bibitem{Wec78} W. Wechler, ``The Concept of Fuzziness in Automata and Language Theory,''  Berlin, Germany: Akademie-Verlag, 1978.



\bibitem{Wee69} W.G. Wee and K.S. Fu, ``A formulation of fuzzy automata and its application as a model of learning systems,'' {\em IEEE Trans.   Syst., Man, Cybern.}, vol. 5, pp. 215-223,  1969.

\bibitem{LY02} F. Lin and H. Ying, ``Modeling and control of fuzzy discrete event systems,'' {\em IEEE Trans.  Syst., Man, Cybern. B, Cybern.} vol. 32 , pp. 408-415, 2002.
  \bibitem{Qiu05} D.W. Qiu, ``Supervisory control of fuzzy discrete event systems: a formal approach,'' {\em  IEEE Trans.  Syst., Man, Cybern.  B, Cybern. } vol. 35, pp. 72-88,  2005.





\bibitem{GOT99} C.L. Giles, C.W. Omlin, and K.K. Thornber, ``Equivalence in knowledge representation: automata, recurrent neural networks, and dynamical fuzzy systems,'' {\em Proceedings of IEEE }, vol. 87, 1623-1640,  1999.



    \bibitem{OTG98} C.W. Omlin, K.K. Thornber, and C.L. Giles, ``Fuzzy finite-state automata can be deterministically encoded in recurrent neural networks,'' {\em IEEE Trans.  Fuzzy Syst. }, vol. 5, pp. 76-89,  1998.













\bibitem{Hop71} J. E. Hopcroft, ``An $n \log n$ algorithm for minimizing the state in a finite automaton,''
 in {\em The Theory of Machines and Computations},  USA: Academic Press, 1971, pp. 189-196.

\bibitem{KW70} T. Kameda and P. Weiner, ``On the State Minimization of
Nondeterministic Finite Automata,'' {\em IEEE Trans. Comput.}, vol. C-19, no.7, pp. 617-627,  July 1970.

\bibitem{Sen92} H. Sengoku, ``Minimization of nondeterministic finite automata,'' Master thesis, Kyoto University, Japan,  1992.
\bibitem {Jiang93}T. Jiang and B. Ravikumar, ``Minimal NFAproblems are hard,'' {\em SIAM J. Comput.}, vol. 22, pp. 1117-1141,  1993.



\bibitem{Mel99}  B. F. Melnikov,    ``A new algorithm of the state-minimization for the nondeterministic finite automata,''  {\em Korean J. Comput. \& Appl. Math.},   vol.6, no. 2, pp. 277- 290, 1999.

\bibitem{Mel20} B. F. Melnikov, ``Once more about the state-minimization of the nondeterministic finite automata,'' {\em Korean J. Comput. \& Appl. Math.},  vol.7, no.3,  pp. 655-662, 2000.

\bibitem{Pol05} L. Polak, ``Minimalizations of NFA using the universal automaton,'' {\em Int. J. Found. Comput. Sci.}, vo.16, no. 5,  pp. 999-1010, 2005.

  \bibitem{CC03} J.-M. Champarnaud and F. Coulon, ``Theoretical study and implementation of the canonical automaton,'' {\em Fund. Inform.}, vol. 55,  pp. 23-38, 2003.

\bibitem{IY02} L. Ilie and S. Yu, ``Algorithms for computing small NFAs,'' in  {\em Proceedings of 27th International Symposium on Mathematical Foundations of Computer Science}, Lecture Notes in Comput. Sci., vol. 2420,  pp. 328-340, 2002.

 \bibitem{IY03}   L. Ilie and S. Yu, ``Reducing NFAs by invariant equivalences,''  {\em Theoret. Comput. Sci. }, vol. 306, pp. 373-390, 2003.


\bibitem{Yu:b} L. Ilie and S. Yu, ``Follow automata,'' {\em Inform.  Comput.},  vol. 186,   140-62, 2003.
\bibitem{Yu:c} L. Ilie, G. Navarro, and S. Yu, ``On NFA reductions,''  in {\em Theory is Forever},   Lecture Notes in Comput. Sci., vol. 3113,  pp. 112-24, 2004.
\bibitem{Yu:d} L. Ilie, R. Solis-Oba, S. Yu, ``Reducing the size of NFAs by using equivalences and preorders, in  {\em Proceedings of  16th Annual Symposium on Combinatorial Pattern Matching}, Lecture Notes in Comput. Sci., vol. 3537, pp. 310-21,  2005.



 \bibitem{Yu:a} C. Campeanu, N. Santean, and S. Yu, ``Mergible states in large NFA,'' {\em  Theoret. Comput. Sci. }, vol. 330, pp. 23-34,  2005.
\bibitem{CC04} J.-M. Champarnaud and F. Coulon, ``NFA reduction algorithms by means of regular inequalities,'' {\em Theoret. Comput. Sci.},  vol. 327, pp. 241-253, 2004.

\bibitem{Cal00} C. S. Calude, E. Calude, and B. Khoussainov, ``Finite nondeterministic automata: Simulation and minimality,'' {\em Theoret. Comput. Sci.}, vol. 242, pp. 219-235, 2000.





\bibitem{Pee91} K. Peeva, ``Equivalence, reduction and minimization of finite automatao versemirings,'' {\em Theoret. Comput. Sci. }, vol. 88, pp. 269-285, 1991.

\bibitem{MS99} D.S. Malik, J.N. Mordeson, and M.K. Sen, ``Minimization of fuzzy finite automata,'' {\em Inform. Sci.}, vol. 113, pp. 323-330,  1999.

\bibitem{BG02}  N.C. Basak, A. Gupta, ``On quotient machines of a fuzzy automaton and the minimal machine,'' {\em Fuzzy Sets  Syst.}, vo. 125, pp. 223-229,  2002.

\bibitem{CM04} W. Cheng and Z. Mo, ``Minimization algorithm of fuzzy finite automata,'' {\em Fuzzy Sets Syst.}, vol. 141, pp. 439-448, 2004.
 \bibitem{Pet06} T. Petkovi\'{c}, ``Congruences and homomorphisms of fuzzy automata,''  {\em Fuzzy Sets  Syst.}, vol. 157, pp. 444-458, 2006.

\bibitem{LL07}H. Lei and Y. M. Li, ``Minimization of states in automata theory based on finite lattice-ordered monoids,'' {\em Inform. Sci.},  vol. 177, pp. 1413-1421,  2007.

 \bibitem{WQ10} L. H. Wu and D. W. Qiu,  ``Automata theory based on complete residuated lattice-valued logic:
Reduction and minimization,'' {\em Fuzzy Sets  Syst.}, vol. 161, pp. 1635-1656, 2010.


\bibitem{JCS10}  M. \'{C}iri\'{c}, A. Stamenkovi\'{c}, J. Ignjatovi\'{c}, and T. Petkovi\'{c},   ``Fuzzy relation equations and reduction of fuzzy automata,''  {\em J.  Comput.  Syst.  Sci.},  vol. 76, pp. 609-633, 2010.



\bibitem{Qiu01}  D.W. Qiu, ``Automata theory based on completed residuated lattice-valued logic (I),'' {\em Sci. China Ser. F },  vol. 44, no. 6,  pp. 419-429, 2001.
\bibitem{Qiu02}  D.W. Qiu, ``Automata theory based on completed residuated lattice-valued logic (II),'' {\em Sci. China Ser. F},  vol. 45, no. 6,   pp. 442-452, 2002.


\bibitem{Pee92} K. Peeva, ``Fuzzy linear systems,''  {\em Fuzzy Sets and Syst. }, vol. 49, no. 3,  pp. 339-355, 1992.

\bibitem{Ren88}
J. Renegar, ``A faster PSPACE algorithm for deciding the existential theory of the reals,''  {\em in Proceedings of the 29th Annual Symposium on Foundations of
Computer Science}, IEEE Computer Society Press, 1988, pp. 291-295.






\bibitem{canny:88} J. Canny, ``Some algebraic and geometric computations in {P}{S}{P}{A}{C}{E},''
in {\em Proceedings of the twentieth annual ACM Symposium
  on Theory of Computing},  New York, USA, 1988, pp. 460--469.




  \bibitem{EY09} K. Etessami and M. Yannakakis, ``Recursive Markov Chains, Stochastic Grammars,
and Monotone Systems of Nonlinear Equations,'' {\em J. ACM},  vol. 56, no. 1, pp. 1-66, 2009.



\bibitem{MQL12} P. Mateus, D. W. Qiu,  and L. Z. Li , ``On the complexity of minimizing probabilistic and quantum automata,'' {\em Inform. Comput.},  vol. 218, pp. 36-53, 2012.

\bibitem{HK11} M. Holzer, and M. Kutrib, ``Descriptional and computational complexity of finite automata-A survey,'' {\em Inform. Comput.}, vol. 209, pp. 456-470,  2011.






\end{thebibliography}
\end{document}